\documentclass[conference]{IEEEtran}
\usepackage{tabularx}
\usepackage[linesnumbered, ruled, vlined]{algorithm2e}
\usepackage{cite}
\usepackage{amsmath,amssymb,amsfonts}
\usepackage{algorithmic}
\usepackage{graphicx}
\usepackage{enumitem}
\usepackage{textcomp}
\usepackage{xcolor}
\usepackage{bm}
\usepackage{amsthm}
\usepackage{booktabs}
\usepackage{multirow}
\usepackage{threeparttable}
\usepackage{subfigure}
\usepackage{makecell}
\usepackage{caption} 
\theoremstyle{plain}

\newtheorem{theorem}{Theorem}
\newtheorem{corollary}{Corollary}
\theoremstyle{definition}
\newtheorem{definition}{Definition}
\theoremstyle{remark}
\newtheorem{remark}{Remark}
\usepackage{geometry}

\captionsetup{font={footnotesize}}
\geometry{
  top=0.693in,    
  left=0.673in,     
  right=0.673in,    
  bottom=4.05cm,    
}
\def\BibTeX{{\rm B\kern-.05em{\sc i\kern-.025em b}\kern-.08em
    T\kern-.1667em\lower.7ex\hbox{E}\kern-.125emX}}
\begin{document}
\newgeometry{left=0.673in, right=0.673in, top=0.760in, bottom=4.05cm}

\title{Variable-Length Joint Source-Channel Coding for Semantic Communication}

\author{\IEEEauthorblockA{Yujie Zhou\IEEEauthorrefmark{1}, Rulong Wang\IEEEauthorrefmark{1},  Yong~Xiao\IEEEauthorrefmark{1}\IEEEauthorrefmark{2}\IEEEauthorrefmark{3}, Yingyu Li\IEEEauthorrefmark{4}, Guangming~Shi\IEEEauthorrefmark{2}\IEEEauthorrefmark{3}\IEEEauthorrefmark{5}
\IEEEauthorblockA{\IEEEauthorrefmark{1}School of Elect. Inform. \& Commun., Huazhong Univ. of Science \& Technology, China}
\IEEEauthorblockA{\IEEEauthorrefmark{2}Peng Cheng Laboratory, Shenzhen, China}
\IEEEauthorblockA{\IEEEauthorrefmark{3}Pazhou Laboratory (Huangpu), Guangzhou, China}
\IEEEauthorblockA{\IEEEauthorrefmark{4}School of Mech. Eng. and Elec. Info., China University of Geosciences (Wuhan), China}
\IEEEauthorblockA{\IEEEauthorrefmark{5}School of Artificial Intelligence, Xidian University, China}}\vspace{-0.3in}}

\maketitle

\begin{abstract}
    This paper investigates a key challenge faced by joint source-channel coding (JSCC) in digital semantic communication (SemCom): the incompatibility between existing JSCC schemes that yield continuous encoded representations and digital systems that employ discrete variable-length codewords.
    It further results in feasibility issues in achieving physical bit-level rate control via such JSCC approaches for efficient semantic transmission.
    In this paper, we propose a novel end-to-end coding (E2EC) framework to tackle it.
    The semantic coding problem is formed by extending the information bottleneck (IB) theory over noisy channels, which is a tradeoff between bit-level communication rate and semantic distortion.
    With a structural decomposition of encoding to handle code length and content respectively, we can construct an end-to-end trainable encoder that supports the direct compression of a data source into a finite codebook.
	To optimize our E2EC across non-differentiable operations, e.g., sampling, we use the powerful policy gradient to support gradient-based updates.
    Experimental results illustrate that E2EC achieves high inference quality with low bit rates, outperforming representative baselines compatible with digital SemCom systems.
\end{abstract}
\begin{IEEEkeywords}
Semantic communication, joint source-channel coding, variable-length code, information bottleneck.
\end{IEEEkeywords}

\section{Introduction}
Semantic communication (SemCom) has revolutionized the communication paradigm, prioritizing the transmission of task-related information in messages rather than raw bit streams \cite{shi2021semantic}.
Previous studies have shown that, compared to the traditional paradigm, SemCom can achieve a significant improvement in communication efficiency for highly personalized task-specific services \cite{xiao2022imitation}, widely acknowledged as a critical enabler for the next-generation network.
In SemCom, signals only need to be recovered at the destination under semantic-level fidelity.
The typical separate source-channel coding is inefficient for SemCom since it primarily serves precise bit-level transmission.

To implement SemCom, joint source-channel coding (JSCC) is the critical enabling technique, which tackles data compression and channel coding together to achieve efficient semantic-oriented transmission \cite{10747747}.
Most recent works \cite{10747747, 8723589, xu2023deep} developed JSCC to achieve SemCom via neural networks with high non-linear expressivity.
This scheme is widely referred to as deep-JSCC, whose effectiveness has been extensively demonstrated through experiments in these works \cite{10747747, 8723589, xu2023deep} in various cases.
Despite its promising potential, most deep-JSCC methods still face a key challenge to achieve high communication efficiency in practical digital communication systems:
They are devoted to minimizing semantic distortion (loss function) through parametric continuous mappings based on neural networks (NNs).
In this case, the range of the corresponding encoding function is on a real-linear vector space rather than the canonical binary codebook in digital communication.
This mismatch makes the physical bit-level rate control based on them impossible, which is, however, a key factor affecting communication efficiency.

Several recent works \cite{shao2021learning, xie2023robust, rulongICC, 10845799} partially addressed this issue.
In works \cite{shao2021learning, xie2023robust, rulongICC}, the authors studied the rate control issue in deep-JSCC for digital SemCom from an information-theoretic perspective.
They applied the information bottleneck (IB) theory to yield the rate-distortion tradeoff in a SemCom paradigm.
However, in the experimental phases, their solutions still relied on empirical pruning or utilized a fixed-length codeword based on vector-quantization or Gumbel-Softmax, resulting in a rate far higher than the information-theoretic bound they claimed.
In work \cite{10845799}, the authors studied a special case of this problem.
They studied the combination of a deep source coding with a traditional digital channel coding by involving channel coding rate as an additional control variable.
In this case, the rate can be well-controlled by the two parts separately through a joint end-to-end semantic distortion.
While powerful, this method's applicability is limited since its end-to-end distortion requires a series of assumptions and heavy approximations, introducing bias.
Besides, the separate coding design of source and channel is not end-to-end (unlike its end-to-end distortion), leading to sub-optimality \cite{10747747}.

To tackle the key challenge, we establish a JSCC framework, namely, E2EC, to achieve the optimal tradeoff between the rate (w.r.t. source and channel) and the end-to-end semantic distortion in digital SemCom systems.
This tradeoff is formulated rigorously, extending the IB theory to involve channel impacts explicitly.
In which, we control the true code length rather than the information rate characterized by mutual information (MI) in IB to realize the bit-level efficient coding.
E2EC follows an end-to-end design, which contains an encoder-decoder pair that is fully implemented by NNs.
The output of the encoder and the input of the decoder are both semantic-level bit streams, which are compatible with digital architectures.
E2EC supports variable-length code by using the structural decomposition of encoding, i.e., a separate design of coding content and length. 
This design could enhance the feasibility of coding to achieve the information-theoretic bound (MI).
E2EC involves general sources and channels in probability. 
No specific data or channel model needs to be involved.
Key contributions are as follows:
\begin{itemize}
	\item We propose the E2EC framework, which forms a semantic rate-distortion over a noisy channel by extending the typical IB theory.
	As a deep-JSCC scheme, E2EC allows discrete variable-length code, highly compatible with real digital communication, bridging the gap between existing deep-JSCC approaches and practical systems.
	
	\item We develop the detailed structure of E2EC, present the intuition followed in its design, and forge potential connections with classical mathematical concepts.

    \item We show the optimization procedure of E2EC, which involves policy gradient, a.k.a. score function estimation, to support gradient-based training across non-differentiable operations, e.g., sampling from stochastic encoding and transmitting across unknown noisy channels.

	\item Experimental results verify the effectiveness of E2EC and also demonstrate the training dynamics and the statistical properties of the E2EC code.
\end{itemize}

\section{System Model and Problem Formulation}
\subsection{Notations}
We clarify the main notations used in this paper.
Uppercase letters (e.g., $X$) denote random variables or constants. Lowercase letters (e.g., $x$) denote elements of sets. The distribution induced by a random variable $X$ is represented by $P_X$, and its probability density function (pdf) or probability mass function (pmf) by $p_X$. The Kullback-Leibler (KL) divergence between $P$ and $Q$ is given by $D_{\text{\rm KL}}(P\|Q)\triangleq  \mathbb{E}_{P}[\log \frac{dP}{dQ}]$ with $P\ll Q$. 
$[R]$ abbreviates a set $\{1,2,...,R\}$ for any positive integer $R$.
For notational comfort, we indiscriminately adopt $dx \triangleq \delta(dx)$ or $\mathrm{Leb}(dx)$ to allow the uniform use of ``$\int dx$'' for integration and summation in suitable cases without ambiguity.

\subsection{System Model}
We consider a SemCom system involving a transmitter
and a receiver w.r.t. a data source, as illustrated in Fig. \ref{fig:1}. 
The data source is denoted by $S\triangleq (X, Y)$, where $Y$ stands for intrinsic semantics while $X$ stands for raw data conditioned on $Y$, i.e., $X \sim P_{X|Y}$.
Notice that $Y$ is unobservable, and only $X$ can be accessed by the transmitter \cite{liu2022indirect}.
The encoder, deployed on the transmitter, is a mapping $f: x \in X \mapsto z \in Z$, transforming the high-dimensional $X$ to a variable-length compressed $Z$ as the channel input with the $f$-induced probability kernel $P_{Z|X}$.
In this paper, due to the digital SemCom paradigm, we assume that $Z$ is supported on a finite binary $\{0,1\}^{R_{\max}}$ with $R_{\max} < \infty$ rather than the default $\mathbb{R}$-linear vector space.
We consider a channel $W: z \in Z \mapsto \hat{z} \in \hat{Z}$ where $\hat{Z}$, the corrupted output of the channel, also lies in $\{0,1\}^{R_{\max}}$. 
The main objective of the decoder of the receiver is to obtain a recovered semantic signal $\hat Y$ given the channel output $\hat{Z}$, where the decoder is a mapping $g: \hat z \in {\hat Z} \mapsto \hat y \in \hat Y$.
The SemCom system maintains a Markov chain $Y \leftrightarrow X \leftrightarrow Z \leftrightarrow {\hat{Z}} \leftrightarrow {\hat{Y}}$.

\subsection{IB-induced Semantic Distortion over A Noisy Channel}
\begin{figure}[t]
	\centering
	\includegraphics[width=1\linewidth]{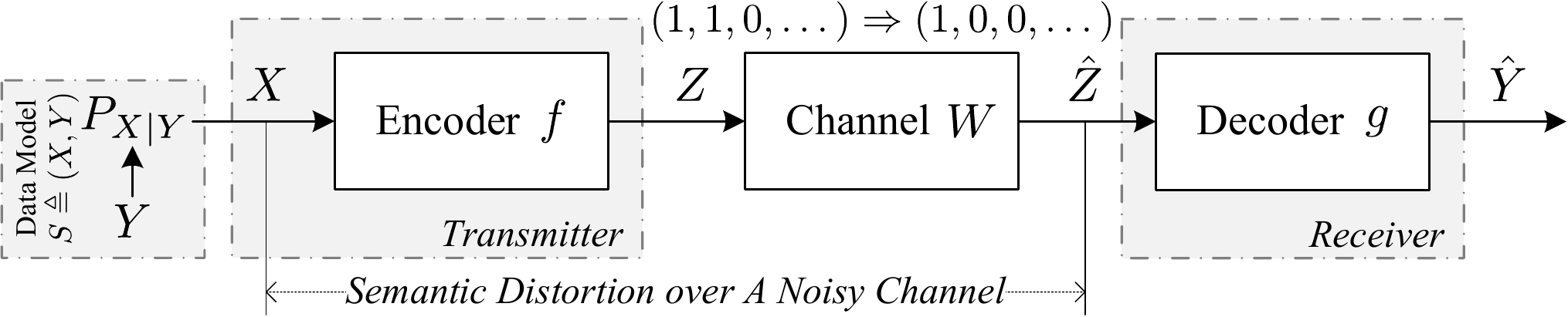}
	\caption{Illustration of a digital SemCom system. The channel is defined on bit sequences, absorbing digital modulation/demodulation mechanisms \cite{xie2023robust}.}
	\label{fig:1}
	\vspace{-0.2in}
\end{figure}
The distortion measure of SemCom is known as the logarithmic loss of remote source coding \cite{6651793}.
This stems from the IB theory.
Let $Y \leftrightarrow X \leftrightarrow Z$, we present the IB expression 
\begin{equation}\label{obj:ib}
	\max\nolimits_{P_{Z|X}: I(X;Z)\triangleq D_{\text{\rm KL}}(P_{{X}{Z}}\|P_{X}P_{Z})\leq \epsilon} I(Z;Y), 
\end{equation}
where $I(\cdot;\cdot)$ denotes mutual information (MI).
The expression (\ref{obj:ib}) means that we maximize $I(Z;Y)$, the informativeness of coded representation $Z$ corresponding to the intrinsic semantics $Y$, under a given $I(X;Z)$, i.e., the information rate that $Z$ can achieve \cite{cover1999elements}.
The measure $I(Z;Y)$ naturally captures the information loss about semantics, which is popular and widely applied in existing works.
To seamlessly extend it to the JSCC case, we rewrite $I(Z;Y)$ into a typical distortion form, i.e.,
\begin{equation}\label{eq:distib}
	I(Z;Y) \Rightarrow d_{\text{\rm IB}}(X, Z) \triangleq D_{\text{\rm KL}}(P_{Y|X}\|P_{Y|Z}), 
\end{equation}
where the average distortion recovers $I(Z;Y)$ with a constant gap, i.e., $\mathbb{E}_{{XZ}}[d_{\text{\rm IB}}(X, Z)] = -I(Z;Y) + I(X;Y)$ with $0 \leq I(X;Y) = \mathrm{const}$ due to the invariant data distribution $S$.

By introducing the channel impact $W$ on $Z$, we can define the IB-based distortion over a noisy channel, i.e.,
\begin{definition}
	The semantic distortion measure induced by the IB principle (\ref{eq:distib}) over a noisy channel is as  \vspace{-0.025in}
	\begin{equation}
		d_{W}(X, Z) \triangleq D_{\text{\rm KL}}(P_{Y|X}\|P_{Y|W(Z)}),\vspace{-0.025in}
	\end{equation}
	which is random w.r.t. $W$. 
	Because the full statistics of $d_{W}$ is complicated, the expected version is often considered\cite{9046817} \vspace{-0.025in}
	\begin{equation}
		d_{\bar W}(X, Z) \triangleq \mathbb{E}_{W}[d_{W}(X, Z)],\vspace{-0.025in}
	\end{equation}
	where $\mathbb{E}_{W}[d_{W}(X, Z)]$ is equal to $\mathbb{E}_{\hat{Z}|Z}[D_{\text{\rm KL}}(P_{Y|X}\|P_{Y|\hat{Z}})]$ by characterizing the random transformation $W$ as a kernel $P_{\hat{Z}|Z}$.
\end{definition}
Semantic distortion (SD) is end-to-end \cite{10845799} and well-defined inheriting the property from $d_{\text{\rm IB}}$ that successfully quantifies the information loss about semantics $Y$ in the channel-corrupted  $\hat{Z}=W(Z)$ relative to the original data $X$ under noisy channel conditions.

\subsection{Problem Formulation}
This paper aims to solve the IB-induced JSCC problem for SemCom, which is expressed as follows\vspace{-0.025in}
\begin{equation}\label{obj:main}\tag{\textbf{P}}
	\min_{f: R\leq \epsilon, \ \mathrm{supp}(Z) \subseteq \{0,1\}^{R_{\max}}} \mathbb{E}_{XZ}[d_{\bar W}(X, Z)], \vspace{-0.025in}
\end{equation}
where $R$ is the rate of $Z = f(X)$, equivalently, it is the average length of the code \cite{cover1999elements}, i.e., $R \triangleq \mathbb{E}_Z [l(Z)]$ with $l(Z) \leq R_{\max}$.
$l(Z)$ represents the length of $Z$.
$R \geq I(X;Z)$ is the practical communication rate, replacing the asymptotically ideal MI in (\ref{obj:ib}) \cite{cover1999elements}.
$\epsilon$ is often  set to be less than the channel capacity.
By tackling this problem, we find a good encoder $f$ which encodes $X$ into a discrete low-rate $Z$ while achieving low SD.
Note that the decoder $g$ is determined via the chain $Y \leftrightarrow X \leftrightarrow Z \leftrightarrow {\hat{Z}}$ if $f$, $S$ and $W$ are known, i.e., $p(y|\hat{z})\!=\!\int dxdz p(x, y, z|\hat{z})\!=\!\int \!dxdz p(\hat{z}|z){p(x, y)p(z|x)}/{p(\hat{z})}$ in which $p(z|x)$ is induced by $f$, $p(y|\hat{z})$ by $g$, $p(\hat{z}|z)$ by $W$, and $p(x,y)$ by $S$.

Two fundamental challenges make this problem (\ref{obj:main}) difficult to solve:
(i) The continuous-to-discrete mapping $f$ is hard to model, especially in the case that variable-length encoding is allowed.
This discreteness further leads to the feasibility issue of differentiable optimization.
(ii) Directly computing the ideal decoder $g$ w.r.t. $f$ through the Markov chain is intractable due to the unknown distributions of source and channel.

\section{The E2EC Framework}
In this section, we propose a computationally feasible solution for (\ref{obj:main}) through our end-to-end E2EC framework.

\vspace{-0.025in}
\subsection{Encoding with Structural Decomposition}
The key characteristic of digital communication lies in the discreteness and variable length of the coding.
We develop an encoding function as follows to match it.

The encoding $f$ is split into two functional modules, namely, $f_l$ and $f_{\tilde{z}}$. 
To be specific, the first module $f_l: x \in X \to l \in [R_{\max}]$ decides code length.
The second module $f_{\tilde{z}}: x \in X \to {\tilde{z}} \in \{0,1\}^{R_{\max}}$ decides code content on the augmented coding space $\{0,1\}^{R_{\max}}$.
By truncating such ${\tilde{z}}$ to its first $l$ bit, we obtain the variable-length code $z = {{\tilde{z}}}{[1:l]}$.
In this case, $Z$ can be supported on a subset of $\{0,1\}^{R_{\max}}$ by excluding all zero-measure singletons.
Therefore, by introducing the truncation operation $\mathrm{truncate}(\cdot,\cdot)$, which is surjective from $({\tilde{z}}, l)$ to $z$, we finally obtain $f(x) = \mathrm{truncate}((f_l,f_{\tilde{z}})(x)) = f_{\tilde{z}}(x)[1\!:\!f_l(x)]$.
In fact, intermediate variables ${\tilde{z}} \sim {\tilde Z}$ and $l \sim L$ can be inserted into the chain between $X$ and $Z$ as $X\leftrightarrow ({\tilde Z}, L)\leftrightarrow Z$.
\begin{remark}\label{remark:1}
	The intuition behind this construction: heuristically, we expect that there exists a $Y$-related semantic decomposition over raw data $X$, analogous to the orthogonal decomposition or the Fourier transform, satisfying that each component provides \emph{disentangled and progressively refined semantic information} in an ascending order.
	Correspondingly, ${\tilde{Z}}$ represents all non-zero coefficients and $\mathrm{truncate}(\cdot,\cdot)$ equipped with stopping criterion $L$ stands for the ``lowpass filtering'':
	if the first $L$ components, i.e., coefficients $(\tilde{Z}^i)$ with their respective implicit ``orthogonal bases'', are sufficient to recover $Y$, then we extract the principal information via the filter as the code. 
	This hypothesis is further empirically explored at Section \ref{sec:exp}.

	From another viewpoint, this structural decomposition also rigorously follows the construction of atomic random measures \cite[Example VI.1.7]{ccinlar2011probability}, which is typical for characterizing point processes, if the collection of all bits of $Z$ is an independency given $X$.
    Therefore, it is a type of structured random prior.
\end{remark}

\vspace{-0.05in}
\subsection{Decoding with One-to-One Embedding}\vspace{-0.025in}
$p(y|\hat{z})$ is computationally intractable.
We use a learnable $g$ corresponding to a randomly initialized probability $q(y|\hat{z})$ to replace $p(y|\hat{z})$.
This corresponds to the traditional variational approximation technique \cite{shao2021learning} that recovers $p(y|\hat{z})$ with $q(y|\hat{z})$.

This $g$ should be carefully designed due to the discreteness of the corrupted $\hat{Z}$.
As previously mentioned in Remark \ref{remark:1}, we priorly assume that $Z$ corresponds to a series of low-frequency coefficients of the implicit decomposition.
We further assume that $\hat{Z}= W(Z)$ holds the same property.
If we know the bases of $\hat{Z}$, we can successfully recover $X$ in a semantic sense under $d_{\bar W}$.
Then, we can infer $\hat{Y}$ of $Y$ through such a recovered $\hat{X}$.

The true bases of $\hat{Z}$ are unknown.
We directly estimate the corresponding implicit decomposition by components: 
for any $\hat{z} \sim \hat{Z}$, the $i$-th basis vector with its coefficient is approximated with a vector $e^i \triangleq e^i(\hat{z}^i) \in \mathbb{R}^d$ for the $i$-th bit of $\hat{z}$\footnote{In fact, this semantic decomposition is not standard, which allows an affine dependence, i.e., $e^i(\hat{z}^i) = \hat{z}^i e^i_1 + e^i_0$, $\hat{z}^i \in \{0,1\}$, $e^i_1$ for $1$ and $e^i_0$ for $0$.}.
Then, we write $\hat{x}= \sum_i e^i(\hat{z}^i)$ to be the semantically recovered data: it should ensure that $p(y|\hat{x})$ is close to $p(y|x)$ in KL divergence derived from SD.
It is a one-to-one embedding of the corrupted code $\hat{z}$ since for each component $\hat{z}^i \in \{0,1\}$, we map it one-to-one into an $\mathbb{R}^d$-representation $e^i$.
We aim to generate non-collapsed semantic representations in so doing, similar to \cite{lin2021learning}.

To summarize, this decoding $g$ is split into two functional modules, i.e., $g_{\hat{x}}$ and $g_y$. 
The first module $g_{\hat{x}}: \hat{z} \in \hat{Z} \to \hat{x} \in \hat{X}$ semantically reproduces $X$ with $\hat{x}= g_{\hat{x}}(\hat{z}) = \sum_i e^i(\hat{z}^i)$.
The second module $g_y: \hat{x} \in \hat{X} \to \hat{y} \in \hat{Y}$ infers the reproduction of $Y$ with $q(y|\hat{x})$.

\vspace{-0.025in}
\subsection{Transformed Objective}\vspace{-0.025in}
In the preceding subsection, we design the variational $q(y|\hat{z})$ induced by $g = g_y \circ g_{\hat{x}}$ in detail. 
Nevertheless, replacing $p(y|\hat{z})$ with $q(y|\hat{z})$ can implicitly affect the calculation of SD.
In fact, the replacement yields a variational upper-bound for SD.
\vspace{-0.05in}
\begin{theorem}\label{thm:vibo}
	The following variational bound holds for SD:\vspace{-0.025in}
	\begin{equation}
		\mathbb{E}_{XZ}[d_{\bar W}(X, Z)] \leq \mathbb{E}_{\hat{Z}}[H(P_{Y|\hat{Z}}, Q_{Y|\hat{Z}})],\vspace{-0.025in}
	\end{equation}
    where $H(P_{Y|\hat{Z}}, Q_{{{Y}}|\hat{Z}}) \triangleq -\int p(y|\hat{z})\log {q(y|\hat{z})}dy$ is the conditional cross entropy. $p(y|\hat{z})$ is w.r.t. $P_{Y|\hat{Z}}$ and $q(y|\hat{z})$ is w.r.t. $Q_{Y|\hat{Z}}$. The equality holds iff $p(y|\hat{z}) = q(y|\hat{z})$.
\end{theorem}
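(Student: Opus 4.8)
The plan is to unfold both sides of the claimed inequality into expectations over the joint law of $(X,Z,\hat{Z},Y)$ under the Markov chain $Y\leftrightarrow X\leftrightarrow Z\leftrightarrow\hat{Z}$, and to reduce the comparison to a single application of Gibbs' inequality. First I would observe that the integrand $D_{\text{\rm KL}}(P_{Y|X}\|P_{Y|\hat{Z}})$ appearing in $d_{\bar W}$ depends on $(X,\hat{Z})$ only, so the outer $\mathbb{E}_{XZ}$ together with the inner $\mathbb{E}_{\hat{Z}|Z}$ defining $d_{\bar W}$ collapses to $\mathbb{E}_{X\hat{Z}}$. Expanding the divergence as cross entropy minus entropy then gives
\[
\mathbb{E}_{XZ}[d_{\bar W}(X,Z)]=\mathbb{E}_{X\hat{Z}}\big[H(P_{Y|X},P_{Y|\hat{Z}})\big]-H(Y|X).
\]

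The crux of the argument is a change of reference measure licensed by the Markov chain. Because $p(x,\hat{z})\,p(y|x)=p(x,y,\hat{z})$, for any $h(y,\hat{z})$ one has $\mathbb{E}_{X\hat{Z}}\big[\int p(y|x)\,h(y,\hat{z})\,dy\big]=\mathbb{E}_{\hat{Z}}\big[\int p(y|\hat{z})\,h(y,\hat{z})\,dy\big]$ after marginalizing $X$. Taking $h=-\log p(y|\hat{z})$ converts the first term above into the conditional entropy $H(Y|\hat{Z})$, so $\mathbb{E}_{XZ}[d_{\bar W}(X,Z)]=H(Y|\hat{Z})-H(Y|X)$. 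The same device with $h=-\log q(y|\hat{z})$ identifies the right-hand side as $\mathbb{E}_{\hat{Z}}[H(P_{Y|\hat{Z}},Q_{Y|\hat{Z}})]=\mathbb{E}_{X\hat{Z}}[H(P_{Y|X},Q_{Y|\hat{Z}})]$, and a conditional Gibbs step $H(P_{Y|\hat{Z}},Q_{Y|\hat{Z}})=H(P_{Y|\hat{Z}})+D_{\text{\rm KL}}(P_{Y|\hat{Z}}\|Q_{Y|\hat{Z}})$ yields $\mathbb{E}_{\hat{Z}}[H(P_{Y|\hat{Z}},Q_{Y|\hat{Z}})]=H(Y|\hat{Z})+\mathbb{E}_{\hat{Z}}[D_{\text{\rm KL}}(P_{Y|\hat{Z}}\|Q_{Y|\hat{Z}})]$.

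Subtracting the two expressions leaves the gap
\[
\mathbb{E}_{\hat{Z}}[H(P_{Y|\hat{Z}},Q_{Y|\hat{Z}})]-\mathbb{E}_{XZ}[d_{\bar W}(X,Z)]=H(Y|X)+\mathbb{E}_{\hat{Z}}[D_{\text{\rm KL}}(P_{Y|\hat{Z}}\|Q_{Y|\hat{Z}})],
\]
a sum of two non-negative quantities (for discrete $Y$, $H(Y|X)\ge 0$, and KL is always non-negative), which proves the bound. The variational part $\mathbb{E}_{\hat{Z}}[D_{\text{\rm KL}}(P_{Y|\hat{Z}}\|Q_{Y|\hat{Z}})]$ vanishes precisely when $q(y|\hat{z})=p(y|\hat{z})$ almost everywhere, matching the stated equality condition.

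The step I expect to be the main obstacle is the change of reference measure, because the two sides are naturally weighted by different posteriors — the distortion weights the log-ratio by the source posterior $p(y|x)$, whereas the cross entropy is weighted by the channel-output posterior $p(y|\hat{z})$ — and only the Markov chain lets me reconcile them; a naive pointwise comparison in $x$ does not close. A secondary point to handle carefully is the residual constant $H(Y|X)$, which is independent of both the encoder and $q$ and therefore only loosens the bound: strictly, equality requires $H(Y|X)=0$ in addition to $p=q$, which I would flag so that the equality claim is stated precisely.
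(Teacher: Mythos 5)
Your proof is correct and is essentially the same argument as the paper's: both use the Markov chain $Y \leftrightarrow X \leftrightarrow Z \leftrightarrow \hat{Z}$ to marginalize $X$ and rewrite the distortion as $H(Y|\hat{Z}) - H(Y|X)$, then bound via $H(Y|X) \ge 0$ (discrete $Y$) and the non-negativity of $D_{\text{\rm KL}}(P_{Y|\hat{Z}}\|Q_{Y|\hat{Z}})$; you merely present it as an explicit gap computation rather than a chain of inequalities. Your closing remark is also a valid catch: since the gap equals $H(Y|X) + \mathbb{E}_{\hat{Z}}[D_{\text{\rm KL}}(P_{Y|\hat{Z}}\|Q_{Y|\hat{Z}})]$, the paper's stated ``iff $p(y|\hat{z})=q(y|\hat{z})$'' equality condition is imprecise --- equality additionally requires $H(Y|X)=0$, i.e., $Y$ deterministic given $X$.
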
\vspace{-0.1in}
\begin{proof}
	The proof is postponed to Appendix.
\end{proof}

By introducing a multiplier $\lambda$ of the Lagrangian of (\ref{obj:main}) with this bound, we obtain a computable objective:\vspace{-0.05in}
\begin{equation}\label{obj:Lagrangian}
	\mathcal{L}[f,g]=\mathbb{E}_{\hat{Z}}[H(P_{Y|\hat{Z}}, Q_{Y|\hat{Z}})] + \lambda R,\vspace{-0.05in}
\end{equation}
which is the main objective to minimize in this paper.
In the remainder of this section, we instantiate $f$ and $g$ as optimizable parameterized functions and present the optimization process.

\vspace{-0.1in}
\subsection{Probability Parameterization}\vspace{-0.05in}
\begin{algorithm}[t]
	\caption{Training Procedure of E2EC}
	\label{alg:E2EC}
	\KwIn{source $S$, channel $W$, multiplier $\lambda$.}
	\KwOut{optimized parameters $\theta$, $\xi$, $e$, and $\phi$.}
	\While{\textnormal{not converged}}{
		\tcp{Sample Collection}
		randomly sample $\left(x_j,y_i\right)$ from $S$;

		encode $z_j = f(x_j)$ w.r.t. $\theta$ and $\xi$, specifically, randomly sample $l_j \sim p(l|x_j; \theta)$ and $\tilde{z}_j \sim p(\tilde{z}|x_j;\xi)$, then $z_j = \mathrm{truncate}(l_j, \tilde{z}_j)$;

		generate channel-corrupted $\hat{z}_j = W(z_j)$;

		infer $\hat{y}_j = g(\hat{z}_j)$ w.r.t. $e$ and $\phi$, specifically, randomly sample $\hat{y}_j \sim q(y|\hat{z}; e, \phi)$;

		collect $N$ samples as $(x_j,y_j, l_j, \tilde{z}_j, \hat{z}_j)$ to approach $p(x,y,l,\tilde{z},\hat{z})$ by $\hat{p}(x,y,l,\tilde{z},\hat{z})$;
		
		\tcp{Gradient Estimation}
		estimate $\nabla_{\theta,\xi} D$ via policy gradient (\ref{eq:policy}), while estimate $\nabla_{e,\phi} D$ (\ref{eq:grad}) and $\nabla_{\theta} R$ (\ref{eq:rate}) via backprop with $\hat{p}(x,y,l,\tilde{z},\hat{z})$;

		update $\theta$, $\xi$, $e$, and $\phi$ via gradient descent towards minimizing $\mathcal{L}=D + \lambda {R}$.
	}
\end{algorithm}
\setlength{\textfloatsep}{0pt}
We parameterize the learnable functions $f$ and $g$ in terms of neural networks (NNs) that have high expressivity \cite{yarotsky2017error}.

Specifically, we define $N\!N$ as a neural network.
We clarify a common abbreviation throughout this paper, i.e., for any given $y$, $p(y|x) \equiv p_{Y|X}(y|x)$.
For encoding, we construct $N\!N_\theta: x \in X \mapsto p_{L|X}(\cdot|x) \in \mathcal{P}_{[R_{\max}]}$, which forms a conditional probability, i.e., the output of this $\theta$-NN corresponds a distribution of the length variable $L$ conditioned on a given $x$.
Subsequently, we instantiate $N\!N_\xi: x \in X \mapsto p_{\tilde{Z}|X}(\cdot|x) = \prod_i p_{\tilde{Z}^i|X}(\cdot|x) \in \mathcal{P}_{\{0,1\}^{R_{\max}}}$, in which the component $(\tilde{Z}^i)$ is an independency conditioned on $X$, following the construction in Remark \ref{remark:1}.
We develop the encoding $f$ as
$z = f(x) = \mathrm{truncate}(f_l(x), f_{\tilde{z}}(x))$ with $l= f_l(x) \sim p(l|x;\theta)$, i.e., $l$ is sampled from $N\!N_\theta(x)$, and ${\tilde{z}} = f_{\tilde{z}}(x) \sim p(\tilde{z}|x;\xi)$, i.e., ${\tilde{z}}$ is sampled from $N\!N_\xi(x)$.

For decoding, we develop the embedding module as previously introduced, i.e., $g_{\hat{x}}$ with the learnable $e = (e^i)_{i\in [R_{\max}]}$.
We also construct $N\!N_\phi: g_{\hat{x}}(\hat{z}) = \hat{x} \in \hat{X} \mapsto q_{Y|\hat{X}}(\cdot|\hat{x}) \in \mathcal{P}_{Y}$.
We have $g$ as $\hat{y}= g(\hat{z}) = g_y \circ g_{\hat{x}}(\hat{z}) \sim q(y|\hat{z}; e, \phi)$, i.e., $y$ is sampled from $N\!N_\phi(g_{\hat{x}}(\hat{z}))$ with $\hat{z} = W(z)$.\vspace{-0.05in}
\begin{corollary}\label{cor:parambo} 
	With probability parameterization which implies the conditional independence $L\perp \tilde{Z}|X$, the following objective is equivalent to the Lagrangian (\ref{obj:Lagrangian}), i.e.,\vspace{-0.05in}
	\begin{equation}
		\begin{aligned}
			\mathcal{L} = & \Big\{D \triangleq - \mathbb{E}_{X} \mathbb{E}_{L|X;\theta}\mathbb{E}_{\tilde{Z}|X;\xi}\mathbb{E}_{Y|X} \mathbb{E}_{\hat{Z}|L\tilde{Z}}[\log {q(Y|\hat{Z};e, \phi)}]\Big\}
			\\ & + \lambda \Big\{{R} = \mathbb{E}_Z [l(Z)] = \mathbb{E}_\theta[L] = \mathbb{E}_{X} \mathbb{E}_{L|X;\theta}[L]\Big\}.
		\end{aligned}
	\end{equation}
\end{corollary}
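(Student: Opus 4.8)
The plan is to verify the two bracketed terms of $\mathcal{L}$ separately: the distortion term $D$ must be shown to coincide with the variational bound $\mathbb{E}_{\hat{Z}}[H(P_{Y|\hat{Z}}, Q_{Y|\hat{Z}})]$ from (\ref{obj:Lagrangian}), and the rate term must be shown to equal $\mathbb{E}_{X}\mathbb{E}_{L|X;\theta}[L]$. For the distortion, I would first unpack the conditional cross entropy: by its definition together with the identity $p(\hat{z})\,p(y|\hat{z})=p(y,\hat{z})$,
\begin{equation*}
	\mathbb{E}_{\hat{Z}}[H(P_{Y|\hat{Z}}, Q_{Y|\hat{Z}})] = -\int p(\hat{z})\!\int\! p(y|\hat{z})\log q(y|\hat{z})\,dy\,d\hat{z} = -\mathbb{E}_{Y\hat{Z}}[\log q(Y|\hat{Z})].
\end{equation*}
The task then reduces to expanding the joint law $p(y,\hat{z})$ through the augmented chain $Y\leftrightarrow X\leftrightarrow(\tilde{Z},L)\leftrightarrow Z\leftrightarrow\hat{Z}$ and marginalizing out the latent variables.

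Next I would write the full joint as $p(x,y,\tilde{z},l,\hat{z}) = p(x)\,p(y|x)\,p(l,\tilde{z}|x)\,p(\hat{z}|\tilde{z},l)$, where the deterministic truncation $z=\mathrm{truncate}(\tilde{z},l)$ has been absorbed into the channel kernel so that $p(\hat{z}|\tilde{z},l)=p(\hat{z}|z)$. The probability parameterization supplies the crucial conditional independence $L\perp\tilde{Z}\mid X$, which factors $p(l,\tilde{z}|x)=p(l|x;\theta)\,p(\tilde{z}|x;\xi)$. Marginalizing $X$, $\tilde{Z}$, and $L$ out of $-\int p(x,y,\tilde{z},l,\hat{z})\log q(y|\hat{z})$ and invoking Fubini to reorder the integrals then yields precisely $-\mathbb{E}_{X}\mathbb{E}_{L|X;\theta}\mathbb{E}_{\tilde{Z}|X;\xi}\mathbb{E}_{Y|X}\mathbb{E}_{\hat{Z}|L\tilde{Z}}[\log q(Y|\hat{Z};e,\phi)]$, since $Y$ enters only through $p(y|x)$ and $\hat{Z}$ only through $(\tilde{Z},L)$. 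This identifies $D$ with the bound, establishing the first bracket.

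For the rate term I would use the structural fact that the emitted code is $Z=\tilde{Z}[1\!:\!L]$, whose length $l(Z)$ equals $L$ identically; hence $R=\mathbb{E}_{Z}[l(Z)]=\mathbb{E}[L]=\mathbb{E}_{\theta}[L]$, and the law of total expectation gives $\mathbb{E}_{\theta}[L]=\mathbb{E}_{X}\mathbb{E}_{L|X;\theta}[L]$. The main obstacle is the measure-theoretic bookkeeping of the augmented chain: one must justify that inserting the intermediate pair $(\tilde{Z},L)$ preserves the Markov factorization and that the degenerate truncation kernel legitimately lets $\hat{Z}$ be conditioned equivalently on $(\tilde{Z},L)$ or on $Z$, so that integrating against a kernel supported on the varying-length slices is well defined (this is exactly where the atomic-measure viewpoint of Remark \ref{remark:1} and the exclusion of zero-measure singletons matter). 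Once that factorization is secured, the Fubini reordering and the rate computation are routine.
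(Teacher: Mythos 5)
Your proposal is correct and follows essentially the same route as the paper's own proof: expand the conditional cross entropy over the joint law of the augmented chain, factor $p(l,\tilde{z}|x)=p(l|x;\theta)p(\tilde{z}|x;\xi)$ via the conditional independence, and collapse the degenerate truncation kernel $p(z|l,\tilde{z})=\delta_{\tilde{z}[1:l]}(z)$ so that $\hat{Z}$ is conditioned directly on $(\tilde{Z},L)$, then rewrite as nested expectations. Your treatment of the rate term (noting $l(Z)=L$ identically and applying the law of total expectation) is in fact more explicit than the paper, which dismisses that part as obvious.
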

\begin{proof}
	The proof is postponed to Appendix.
\end{proof}

\subsection{Optimization}
The training procedure is summarized in Algorithm \ref{alg:E2EC}, where the unknown $S$ and $W$ are only accessible through sampling.
The critical technical trick for gradient estimation is given in the following.
The calculation of $\mathcal{L}$ involves non-differentiable operations, including the sampling from parameterized $\theta$- and $\xi$-NNs and esp. from the unknown channel $W$ and the truncation $\mathrm{truncate}(\cdot,\cdot)$.
Therefore, this results in the untraceability of gradients, making it impossible to achieve efficient gradient backpropagation in training directly.


Specifically, $\nabla_{e, \phi} D$ and $\nabla_\theta {R}$ can be successfully calculated through backpropagation, but $\nabla_{\theta, \xi} D$ cannot.
In the following, we elaborate on how to compute such intractable terms.
With $\nabla_{\theta, \xi} D = (\nabla_{\theta} D, \nabla_{\xi} D)$, explicitly calculate $\nabla_\theta D$ as follows:
\begin{equation}\label{eq:policy}
    \!\!\!
    \begin{aligned}
        \nabla_\theta D & = - \mathbb{E}_{X} \nabla_\theta \mathbb{E}_{L|X;\theta}\underbrace{\mathbb{E}_{\tilde{Z}|X;\xi}\mathbb{E}_{Y|X} \mathbb{E}_{\hat{Z}|L\tilde{Z}}[\log {q(Y|\hat{Z};e, \phi)}]}_{\triangleq \mu(L,X)}
        \\ & = - \int p(x) \nabla_\theta p(l|x; \theta)\mu(l,x)dxdl
        \\ & = - \int p(x) p(l|x; \theta)\nabla_\theta \log p(l|x; \theta)\mu(l,x)dxdl
        \\ & \approx - \frac{1}{N}\sum_{j=1}^N \mu_j(l_j,x_j)\nabla_\theta \log p(l_j|x_j; \theta)
        \\ & = \nabla_\theta \Bigg[- \frac{1}{N}\sum_{j=1}^N \log q(y_j|\hat{z}_j)\log p(l_j|x_j; \theta)\Bigg].
    \end{aligned}
    \!\!\!\!\!\!\!\!\!
\end{equation}
The approximation is conducted via Gibbs sampling \cite{shao2021learning} over the joint distribution of all variables through the Markov chain, allowing the replacement of $p(x,y,l,\tilde{z},\hat{z})$ with the empirical $\hat{p}(x,y,l,\tilde{z},\hat{z}) = \frac{1}{N}\sum_{j}^{N}\delta_{x_j}(x)\delta_{y_j}(y)\delta_{l_j}(l)\delta_{\tilde{z}_j}(\tilde{z})\delta_{\hat{z}_j}(\hat{z})$.
So $\mu_j$ abbreviates the sampling of $y,\tilde{z},\hat{z}$; $\hat{p} \rightharpoonup p$ a.s. as $n \to \infty$.

This effective score function estimation for gradients is also called \emph{policy gradient} following canonical policy optimization \cite{sutton1999policy}.
It enables us to estimate the gradient from samples across non-differentiable operations.
For $\nabla_\xi D$, we do so also.

On the other hand, for $\nabla_{e, \phi} D$, we typically estimate
\begin{equation}\label{eq:grad}
    \nabla_{e, \phi} D \approx  \nabla_{e, \phi} \Bigg[- \frac{1}{N}\sum_{j=1}^N\log q(y_j|\hat{z}_j;e, \phi)\Bigg].
\end{equation}
For $\nabla_{\theta} R$, we also have
\begin{equation}\label{eq:rate}
    \nabla_{\theta} R \approx  \nabla_{\theta} \Bigg[- \frac{1}{N}\sum_{j=1}^N \int l\cdot p(l|x_j;\theta)dl\Bigg].
\end{equation}

According to (\ref{eq:policy})-(\ref{eq:rate}), we successfully estimate the gradient of the Lagrangian $\mathcal{L}$ w.r.t. the parameters $\theta$, $\xi$, $e$, and $\phi$.
Then, we use the typical gradient descent to find the stationary point.

\vspace{-0.025in}
\subsection{Further Discussion}\vspace{-0.025in}
The E2EC framework is now well-established. Let us clarify its computational feasibility, as claimed at the start of this section. 
Besides variational approximation, this crucial property is also enabled by the structural decomposition in the encoding.
Without decomposition, $f$ must directly map $X$ to $Z$, in which $Z$ has $2^{R_{\max}+1}-2$ values, as each one corresponds to a valid code of length $l \leq R_{\max}$. In this case, a standard categorical implementation of $f$ holds an output dimension exponential in $R_{\max}$, leading to high space complexity that harms trainability.
As Remark \ref{remark:1} states, by leveraging the prior structure of length-content separation, we reduce the dimensional dependence on $R_{\max}$ to linear.

\section{Performance Evaluation}\label{sec:exp}\vspace{-0.025in}
\subsection{Experimental Setup}\vspace{-0.025in}
We evaluate E2EC on the MNIST dataset, which contains 70,000 handwritten digit images in 10 classes, each consisting of 28$\times$28 gray-scale pixels. 
We utilize the digit class as $Y$ and the corresponding pixel values as $X$.

We consider a binary symmetric channel (BSC) with an error probability $p_e$ to model the digital communication channel. $p_e$ acts on the bit stream, i.e., each component of a codeword.

The encoder is composed of $f_l$ and $f_{\tilde{z}}$, both implemented as feedforward ReLU networks with layer normalization.
Especially, $f_l$ maps $X$ to a sample of a categorical distribution, $f_{\tilde{z}}$ maps $X$ to a sample of a product of Bernoulli distributions.

The decoder is composed of $g_{\hat{x}}$ and $g_y$. $g_{\hat{x}}$ is implemented as a combination of a query on a collection of trainable $\mathbb{R}$-valued vectors for channel-corrupted bit streams and a sum operation.
One-to-one embeddings are initialized through a series of $d$-dimensional i.i.d. variables with an identical isotropic normal distribution to ensure the (initial) orthogonality in an asymptotical sense.
$g_y$ is a classifier implemented as a feedforward ReLU network with layer normalization.

We compare E2EC with the following baselines:
\begin{itemize}
    \item \textbf{digital deep-JSCC:} This represents a series of traditional deep-JSCC schemes for digital SemCom, which uses the vector quantization and Gumbel-softmax to learn discrete representations with fixed-length codewords, e.g., \cite{xie2023robust, rulongICC}. 
    \item \textbf{RDBO:} The rate-distortion bound (RDBO) that semantic transmission can achieve.
	It is implemented by the typical deep variational IB.
	RDBO does not involve any channel condition, which corresponds to the ideal result.
\end{itemize}
 
We use a series of technical tricks to facilitate the convergence of training. 
{For details, please refer to our online code}\footnote{Due to space limitation, detailed configurations and engineering tricks are omitted, e.g., variance reduction and symbol grounding \cite{lin2021learning}. 
Our experimental code is available at
https://github.com/SamuChamp/E2ECframework.}.

\subsection{Numerical Results}
\begin{figure}[t]
		\includegraphics[width=\linewidth]{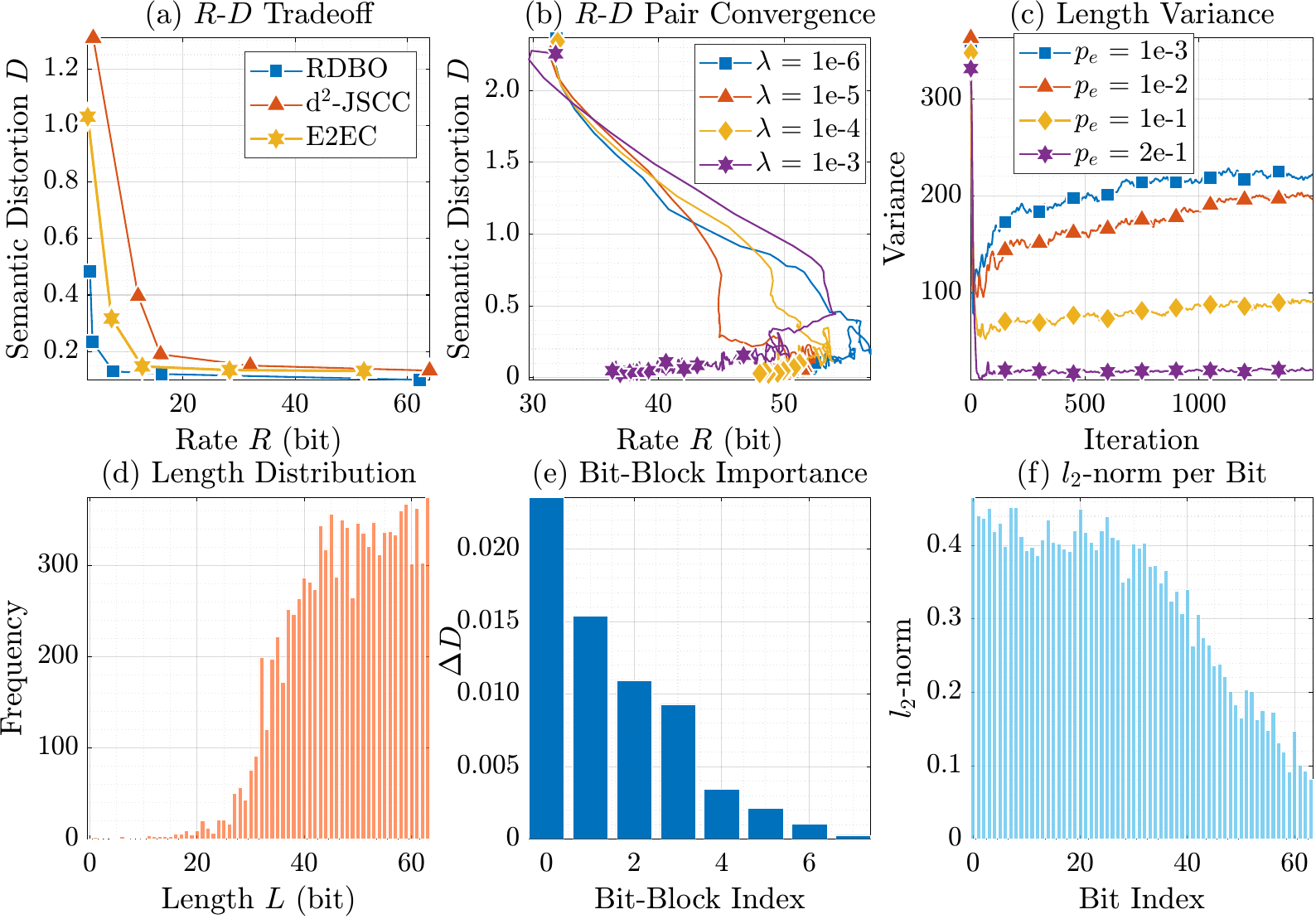}
		\caption{
			Numerical results: (a) The rate-semantic distortion tradeoff over a BSC. (b) The convergence of E2EC's $R$-$D$ curve on the information plane. 
			(c) The evolution of $\mathrm{Var}(L)$, during training. 
			(d) The non-normalized empirical $\hat{P}_L$, where frequency means the number of samples with the  normalization constant 10,000 (the size of testing dataset). 
			(e) The increment of $D$, i.e., $\Delta D$ varying by discarding equally divided bit-blocks. (f) The $l_2$-norm of 0-1 embeddings per bit, i.e., $\|e_1^i\|_2 = \|(1\times e_1^i+e_0^i)-(0\times e_1^i+e_0^i)\|_2, \forall i \in [R_{\max}]$.
		}
    \label{fig:results}
\end{figure}
As depicted in Fig. \ref{fig:results}, we evaluate the performance of E2EC, the training dynamics, and the properties of the length variable in (a)-(d).
Also, we empirically illustrate the intuition provided in Remark \ref{remark:1} in (e) and (f).
The default settings are as follows: (i) E2EC: $R_{\max}$=64, $d$=64, and $\lambda$=1e-6; (ii) BSC: $p_e$=1e-1.

Specifically in Fig. \ref{fig:results}(a), due to the introduction of variable-length coding, it is obvious that E2EC's coding efficiency far exceeds that of the traditional Gumbel-softmax-based d$^2$-JSCC (digital deep-JSCC) with high flexibility.
\emph{In contrast to the d$^2$-JSCC, which requires explicitly specifying code length, E2EC can flexibly balance rate and distortion through the additional degree of freedom $\lambda$ given the maximal length.}
Besides, as the rate $R$ increases, both E2EC and d$^2$-JSCC gradually approach the optimal RDBO at the default BSC.

In Fig. \ref{fig:results}(b), we study the convergence dynamics of the $R$-$D$ pairs on the information plane varying with $\lambda$.
The trajectories start from the top left corner, move rapidly to the bottom right, and finally converge at the bottom left, where there are denser markers. This is in line with IB's information plane dynamics.

In Fig. \ref{fig:results}(c) and (d), the variable-length characteristics of the learned codedwords are depicted. 
As $p_e$ decreases, the length variation increases to better match the MI boundary.

Fig. \ref{fig:results}(e) and (f) empirically verify the disentanglement and progressive refinement of semantic information.
In particular, (e) reflects that the lower bit of the code is more semantically informative. (f) verifies this view by the Euclidean distance of 0-1 embeddings: the closer they are, the more difficult it is to distinguish between 0 and 1, and hence the less the effective information content.
Disentanglement, i.e., orthogonality, holds since \emph{the maximum absolute inner product of cross embeddings} is very small, i.e., {0.0480} at the stationary point ({0.1718} at the initial point).
We also present the performance of E2EC as $p_e$ and $R_{\max}$ vary in Table \ref{table:perform}. This shows the rate-adaptability of E2EC and its sensitivity to the critical parameter $R_{\max}$.
\begin{table}[t]
    \centering
    \begin{threeparttable}
        \setlength{\tabcolsep}{0.09in}
        \caption{
			Rate-accuracy pair varying with BSC error probability and maximal code length under the default setting (95\% confidence interval).
		}\label{table:perform}
        \begin{tabular}{c|cccc}
            \toprule
            \multirow{1}{*}{$p_e$}
            & 2e-1 & 1e-1  & 1e-2  & 1e-3          \cr
          
            \midrule\multirow{1}{*}{$R$ (bit)}
            & 57.79$\pm$0.09 & 48.22$\pm$0.15 & 39.37$\pm$0.12  & 37.84$\pm$0.13       \cr

			\midrule\multirow{1}{*}{ACC (\%)}
            & 98.02$\pm$0.27 & 97.97$\pm$0.28 & 98.18$\pm$0.26  & 98.41$\pm$0.25       \cr
            \bottomrule
        \end{tabular}
    \end{threeparttable}\vspace{0.05in}
    \begin{threeparttable}
        \setlength{\tabcolsep}{0.09in}
        \begin{tabular}{c|cccc}
            \toprule
            \multirow{1}{*}{$R_{\max}$}
            & 8 & 16  & 32  & 64         \cr
          
            \midrule\multirow{1}{*}{$R$ (bit)}
            & 7.20$\pm$0.03 & 15.65$\pm$0.01 & 27.23$\pm$0.06  & 48.22$\pm$0.15       \cr

			\midrule\multirow{1}{*}{ACC (\%)}
            & 92.64$\pm$0.51 & 95.56$\pm$0.40 & 97.72$\pm$0.29  & 97.97$\pm$0.28       \cr
            \bottomrule
        \end{tabular}
    \end{threeparttable}
\end{table}

\vspace{-0.005in}
\section{Conclusion}\vspace{-0.005in}
This paper proposes E2EC, an end-to-end coding framework for digital SemCom that addresses the mismatch between existing deep-JSCC methods and practical digital communication architectures.
The key innovation of E2EC lies in its capability to yield discrete variable-length binary codes directly through a separate design of code length and content.
This design enables E2EC to achieve the real bit-level rate control while preserving NN expressivity in semantic understanding to realize efficient transmission in digital SemCom systems.
Theoretical insights are provided to offer intuitive explanations of the E2EC workflow's logic.
Optimization across non-differentiable operations is established to ensure the E2EC's trainability. 
Experimental results show the significant advantage of the end-to-end coding design for digital SemCom versus representative baselines.
We hope that this work will provide new ways and insights for the practical design of efficient digital SemCom systems.


\vspace{-0.005in}
\appendix\vspace{-0.005in}
\begin{proof}[Proof of Theorem \ref{thm:vibo}]
	Directly bound the SD term to obtain\vspace{-0.05in}
	\begin{equation}
		\begin{aligned}
			& \mathbb{E}_{XZ}[d_{\bar W}(X, Z)] = \mathbb{E}_{XZ}[\mathbb{E}_{\hat{Z}|Z}[D_{\text{\rm KL}}(P_{Y|X}\|P_{Y|\hat{Z}})]]
			\\ & = \int dxdydzd\hat{z} p(x, z)p(\hat{z}|z) p(y|x)\log \frac{p(y|x)}{p(y|\hat{z})}
			\\ & = \bigg\{\int dxdy p(x)p(y|x)\log {p(y|x)} \leq 0 \bigg\}
			\\ & \quad - \int dxdydzd\hat{z} p(z|x)p(\hat{z}|z) p(y,x)\log {p(y|\hat{z})}
			\\ & \leq - \int dyd\hat{z} p(\hat{z})p(y|\hat{z})\log {p(y|\hat{z})}
			\\ & \leq - \int dyd\hat{z} p(\hat{z})p(y|\hat{z})\log {p(y|\hat{z})}
			\\ & \quad + \int dyd\hat{z} p(\hat{z})p(y|\hat{z})\log \frac{p(y|\hat{z})}{q(y|\hat{z})} \; (D_{\text{\rm KL}}(\cdot\|\cdot) \geq 0)
			\\ & = - \int dyd\hat{z} p(\hat{z})p(y|\hat{z})\log {q(y|\hat{z})}
		\end{aligned}
		\vspace{-0.05in}
	\end{equation}
	We conclude this proof with the pre-defined notation of the conditional cross entropy.
\end{proof}
\begin{proof}[Proof of Corollary \ref{cor:parambo}]
	Recall $L\perp \tilde{Z}|X$ and $X\leftrightarrow ({\tilde Z}, L)\leftrightarrow Z$, then we have
	\begin{equation}
		\begin{aligned}
			& \mathbb{E}_{\hat{Z}}[H(P_{Y|\hat{Z}}, Q_{Y|\hat{Z}})]
			\\ & = \!- \! \int \!\! dxdydzd\hat{z}dld\tilde{z} p(z|l,\tilde{z})p(l,\tilde{z}|x)p(\hat{z}|z) p(y,x)\log {q(y|\hat{z})}
			\\ & = \!- \! \int \!\! dxdyd\hat{z} dld\tilde{z}p(x)p(y|x)p(l|x)p(\tilde{z}|x) p(\hat{z}|l,\tilde{z}) \log {q(y|\hat{z})}
		\end{aligned}
	\end{equation}
	The last equality holds due to the degeneration of the transition $p(z|l,\tilde{z}) = \delta_{\tilde{z}[1:l]}(z)$ corresponding to the truncation operation.
    $p(l|x)$, $p(\tilde{z}|x)$, and $q(y|\hat{z})$ can be expressed with $\theta$-, $\xi$-, and $\phi$- NNs, respectively. Rewriting the above via expectation operations yields the first part.
	The second part holds obviously.
\end{proof}
\bibliographystyle{IEEEtran}
\bibliography{main}
\end{document}